\documentclass[twocolumn]{IEEEtran}
\usepackage{amsmath,amssymb,url}
\interdisplaylinepenalty=3000

\newtheorem{thm}{Theorem}
\newtheorem{proof}{Proof}
\newtheorem{lem}{Lemma}
\newtheorem{defn}{Definition}

\newtheorem{cor}{Corollary}

\newcommand{\Z}{\mathbb{Z}}

\newcommand{\e}{\varepsilon}

\newcommand{\LA}{\mathbf{LA}_n}
\newcommand{\DR}{\mathbf{DR}_n}
\newcommand{\TJC}{\mathbf{TJC}_n}

\def\a{\alpha}

\def\d{\delta}
\def\e{\varepsilon}

\begin{document}
\title{Orthogonal Designs and a Cubic Binary Function}

\author{Sophie Morier-Genoud
\hskip1cm Valentin Ovsienko
\thanks{S. Morier-Genoud,
Universit\'e Paris 6,
IMJ, 
UFR 929 de Math\'ematiques,
Case 247, 4~pl. Jussieu,
75005 Paris, France;
sophiemg@math.jussieu.fr
} \thanks{V. Ovsienko,
CNRS,
ICJ,
Universit\'e  Lyon~1,
43 bd. du 11 novembre 1918,
69622 Villeurbanne cedex,
France;
ovsienko@math.univ-lyon1.fr
}}

\maketitle

\begin{abstract}
Orthogonal designs are fundamental mathematical 
notions used in the construction of space time block codes for wireless transmissions. 
Designs have two important parameters, the rate and the decoding delay;
the main problem of the theory is to construct designs maximizing the rate
and minimizing the decoding delay.

All known constructions of CODs are inductive or algorithmic.
In this paper, we present an explicit construction of optimal CODs.
We do not apply recurrent procedures and do calculate the matrix elements directly.
Our formula is based on a cubic function in two binary $n$-vectors. 
 In our previous work (Comm. Math. Phys., 2010, and J. Pure and Appl. Algebra, 2011),
 we used this function to define a series of non-associative algebras generalizing the classical
 algebra of octonions and to obtain sum of squares identities of Hurwitz-Radon type.
\end{abstract}
\begin{IEEEkeywords}
Orthogonal designs, decoding delay, 
maximal rate, peak-to-average power ratio, space-time codes, generalized octonions.
\end{IEEEkeywords}

\section{Introduction}
\label{sec1}

Orthogonal designs first appeared in the classical work of Hurwitz \cite{Hur1}, \cite{Hur} 
and Radon \cite{Rad},
in order to solve the problem of sum of squares identities (also known as composition of quadratic forms).
This problem can be formulated in different ways and related to many mathematical questions 
(normed division algebras, vector fields on spheres, Clifford modules, 
immersion of projective spaces in euclidean spaces...) arising in different fields. 
The general problem is widely open and keep inspiring work of many mathematicians, 
see \cite{Squa} and~\cite{Sha} for surveys.
In the 1970's,  orthogonal designs and their generalizations have been extensively studied 
from combinatorial and number theoretic viewpoints, see Geramita et al. 
\cite{GGW}--\cite{GeGe} and references
therein.

Orthogonal designs keep attracting much attention, since
they are used to construct space-time block codes 
for wireless communication with multiple transmit antennas. 
This idea was introduced by Tarokh, Jafarkhani and Calderbank \cite{TJC}, as a
generalization of the Alamouti scheme \cite{Ala} for wireless communication with two antennas.
Space-time block codes built out of the orthogonal designs achieve full transmit diversity 
and have a simple maximum likelihood decoding algorithm.

In this paper, we describe a method of construction of orthogonal designs.
Unlike all known constructions which are inductive,
i.e., use block matrices of small sizes to construct bigger matrices,
our construction calculates elements of the matrices directly.
In particular, we construct designs
satisfying optimal criteria of \cite{Lia} and \cite{AKP}, \cite{AKM}.
We also construct designs of type \cite{TJC} and \cite{DaRa2}
defined by matrices that have no zero elements.

\subsection{Definitions and known results}

\begin{defn}
A \textit{real orthogonal design} (ROD) of type $[p,n,k]$ is a matrix $G$ of size 
$p\times n$
with real entries $0,\pm x_1,\cdots,\pm x_{k}$, 
satisfying
$$
G^T\,G=\left(x_1^2+\cdots+x_k^2\right)I_n,
$$
where $G^T$ is the transpose matrix of $G$.
\end{defn}

\begin{defn}
A \textit{complex orthogonal design} (COD) of parameters $[p,n,k]$ is a matrix $H$ of size 
$p\times n$
with complex entries $0,\pm z_1,\cdots,\pm z_{k}$, and  their conjugates 
$\pm z_1^*,\cdots,\pm z_{k}^*$,
 satisfying
$$
H^*H= \left(|z_1|^2+\cdots+|z_k|^2 \right)I_n,
$$
where $H^*$ is the complex conjugate transpose of $H$.
\end{defn}

\begin{defn}
Given a $[p,n,k]$-(R or C)OD, the ratio $\frac{k}{p}$ is called the \textit{rate} 
of the design and the parameter $p$ is called the \textit{decoding delay} of the design.
\end{defn}

The main problem in the construction of real or complex orthogonal designs is 
to maximize the rate $\frac{k}{p}$  and minimize the delay $p$ for a given $n$.
The following answers have been provided
\begin{enumerate}
\item 
 $[p,n,k]$-ROD of rate $1$ exist for all $n$, and in this case the minimum delay is
$p=2^{\delta(n)}$, where
$$
\begin{array}{l}
 \delta(n)=
    \begin{cases}
\frac{n}{2}      & \text{ if $n=2,4,6\mod8$ }, \\[4pt]
\frac{n-1}{2}   & \text{ if $n=1, 7\mod8$ }, \\[4pt]
\frac{n+1}{2}   & \text{ if $n=3, 5\mod8$ },\\[4pt]
\frac{n}{2}-1    & \text{ if $n=0\mod8$}.
  \end{cases}
\end{array}
$$
This is a way to formulate the classical theorem or Hurwitz and Radon.
\item 
Using a doubling process of ROD of rate 1, Tarokh et al. \cite{TJC} obtain 
COD of rate $\frac{1}{2}$ and decoding delay $2^{\delta(n)+1}$;\\
We will denote by $\TJC$ this class of CODs, the parameters are
$$
\left[2^{\delta(n)+1},n,2^{\delta(n)}\right].
$$
\item
 Liang \cite{Lia} proves that the maximal rate of a $[p,n,k]$-COD with $n\not=p$ is 
$\frac{1}{2}+\frac{1}{n}$, if $n$ is even, and $\frac{1}{2}+\frac{1}{n+1}$, if $n$ is odd.
\item 
Adams et al. \cite{AKP} and \cite{AKM} find a tight lower bound for
 the decoding delay $p$ in a non-square COD 
achieving the maximal rate given by a binomial coefficient.
Let $n=2m-1$ or $n=2m$, then 
$$
p\geq{{2m}\choose{m-1}},
$$ 
for $n=0,1,3\mod4$
and 
$$
p\geq2{{2m}\choose{m-1}},
$$ 
for $n=2\mod4$.
We will denote by $\LA$ the class of CODs achieving the maximal rate and minimal decoding delay.
\item 
Liang \cite{Lia} and Lu et al. \cite{Lu} give
algorithms to produce designs of type $\LA$.
Another construction is given in~\cite{DaRa}.

\item 
Das and Rajan \cite{DaRa2} construct CODs of rate $\frac{1}{2}$ and decoding delay $2^{\delta(n)}$.
We will denote by $\DR$ this class of CODs, the parameters are
$$
\left[2^{\delta(n)},\,n,\,2^{\delta(n)-1}\right].
$$
It is interesting that the decoding delay of these CODs is twice lower than that of $\TJC$.
\end{enumerate}

Let us stress that CODs of rate $\frac{1}{2}$ are of interest, 
since for large values of $n$ the maximal rate is almost $\frac{1}{2}$. 
For instance, for $n=12$ the COD $\LA$ has rate $\frac{7}{12}$ and
decoding delay $792$, whereas $\DR$ has rate $\frac{1}{2}$
and decoding delay $64$, see \cite{DaRa2} for a comparative table.

\subsection{Main results and organization of the paper}

The goal of this article is to present a unified construction of RODs and CODs.

We start with an explicit formulas for the matrices $G$ of size $2^r\times2^r$
satisfying the conditions of ROD. 
The rows and columns of the matrices are labelled by elements in the set 
$\Z_2^r$, i.e., of $r$-vectors with coefficients 0 and 1.
The entries in the matrices are given by an explicit function 
$f:\Z_2^r\times \Z_2^r\rightarrow \Z_2$.
We then explain an easy way to reduce a RODs to a CODs.

With our method we  construct RODs with parameters:
\begin{enumerate}
\item[\textbullet] $\left[2^{\delta(2r)},\,2r,\,2^{\delta(2r)}\right]$,

that will produce CODs with same parameters as
$\TJC$ (using a doubling process) and with same parameters as
$\DR$ (using a reduction process), provided $n\not=1 \mod 8$;
\medskip
\item[\textbullet] 
$\left[2{{r+1}\choose{(m-1}},\, 2r,\, 2{{r}\choose{m}}\right]$, if $r=0,1,3\mod4$,

\medskip
$\left[4{{r+1}\choose{(m-1}},\, 2r,\, 2{{r}\choose{m}}\right]$, if $r=0\mod4$,
\medskip

where $m$ is defined by $r=2m-1$ or $2m$.
These RODs will reduce to COD with same parameters as $\LA$.
\end{enumerate}

The paper is organized as follows.
The next section contains the main ingredients of our approach.
We construct  RODs
with parameters that are twice the parameters of the optimal CODs.

Section \ref{RtoC} describes the reduction from the
$[p,n,k]$-RODs to $\left[\frac{p}{2},\,\frac{n}{2},\,\frac{k}{2}\right]$-CODs,
leading to optimal CODs of type $\LA$.

Section \ref{IndSec} presents a procedure that
allows us to construct a $\left[\frac{p}{2},\,n,\,k \right]$-COD out of a $[p,n,k]$-ROD,
provided the ROD is stable under the duality.
We thus obtain the CODs of types $\TJC$ and $\DR$.
Let us mention that the corresponding matrices have no zero entries.

Proofs of technical statements, as well as properties
of the binary functions we use, are collected in Appendix.

\section{General construction of RODs}

\subsection{Combinatorics over $\Z_2$}
We denote by $\Z_2^r$
the set of $r$-vectors $u=(u_1,\ldots,u_r)$,
where $u_i= 0$ or $1$\footnote{We use the notation $\Z_2=\{0,1\}$ 
for the abelian group of rank 2, other
notations: $\mathbb{F}_2$ and $\Z/2\Z$ are also often used.}.
The \textit{Hamming weight} $|u|$ of an element is the number of non-zero component, i.e.
$$
|u|=\#\left\{u_i=1\right\}_{1\leq i \leq r}.
$$
The sum of two elements $u$ and $v$ is just the sum componentwise modulo 2.
Every element is a sum of the basis vectors
$$
\e^j=(0,\ldots,0,1,0, \ldots,0),
$$
with 1 at $j$-th position.
We will also consider the element of maximal weight $r$:
$$
\e=
(1,1,\ldots,1).
$$
We will use the involution on $\Z_2^r$, that we call the ``hat duality'':
\begin{equation}
\label{Dua}
\hat{u}=u+\e^1,
\end{equation}
i.e., the change of 1st coordinate.

The following function in two arguments
$f:\Z_2^r\times \Z_2^r\to\Z_2$ 
plays the key r\^ole in our approach:
$$
\boxed{
f(u,v)=\sum_{i<j<k}\left(
u_iu_jv_k 
+u_iv_ju_k 
+v_iu_ju_k 
\right)
+ \sum_{i\leq j}\;u_iv_j,
}
$$
We will also use the function in one variable
$\a(u):=f(u,u)$,
given explicitly by
$$
\alpha(u)=\sum_{i<j<k}\;u_iu_ju_k + \sum_{i\leq j}\;u_iu_j.
$$
The value of $\alpha(u)$ depends only on the weight of $u$:
$$
\alpha(u)=\left\{ 
\begin{array}{ll}
0 & \text{ if } |u|=0 \mod 4,\\
1, &\text{ otherwise. }
\end{array}
\right.
$$
The function $f$ is used in all the constructions to determine signs, while $\a$
is used as a ``statistic'' to select good elements of~$\Z_2^r$.
Properties of $f$ and $\a$ are presented in Appendix.
 
\subsection{General construction of RODs}
In this section, we construct $(p\times{}n)$-matrices whose rows and columns are indexed
by subsets $W\subset \Z_2^r$ and $V\subset \Z_2^r$ of cardinality $p$ and $n$, 
respectively.

We define the matrix $G_u$, $u\in\Z_2^r$ by
$$
G_u=
\begin{array}{cc}
{}_v&\\[4pt]
\left(
\begin{array}{cccc}
&\vdots&\\[4pt]
&\vdots&\\[6pt]
\cdots&G_u^{w,v}&\cdots\\
&\vdots&\\
\end{array}
\right)
&\left.
\begin{array}{l}
\\
\\
\\[6pt]
\!\!\!\!\!{}_w\\
\\
\end{array}\right.
\end{array}
$$
where the entry in position $(w,v)$ is 
$$
G_u^{w,v}=\left\{ 
\begin{array}{ll}
(-1)^{f(u,v)}, & \text{ if } u=v+w,\\[4pt]
0, &\text{ otherwise. }
\end{array}
\right.
$$

The following properties are obvious:
\begin{enumerate}
\item
the matrix $G_u$ has at most one nonzero element on each row and on each column;
\item
if $V=W=\Z_2^r$, then $G_u$ is a 
$(2^r\times2^r)$-matrix with exactly one non-zero element  on each row and column.
\end{enumerate}

\begin{defn}
We call $U,V,W$ an \textit{admissible triple} if the following two conditions
are satisfied:
\begin{enumerate}
\item
$$
W=U+V,
$$ 
i.e., $u+v\in{}W$ for all $u\in{}U$ and $v\in{}V$ and
every element $w\in{}W$ can be written in the form $w=u+v$.
\item
If a non-zero element $w\in{}W$ decomposes in two ways:
$w=u+v=u'+v'$ then
\begin{equation}
\label{Cond}
\alpha(u+u')=\alpha(v+v')=1.
\end{equation}
\end{enumerate}
\end{defn}

\begin{thm}
\label{ThmOne}
{\it If $U,V,W$ is an admissible triple, then one has:

(i)
$
G_u^T\,G_u=I_n,
$
for all $u\in{}U$.

(ii)
$
G_u^T\,G_{u'}+G_{u'}^T\,G_u=0,
$
for all $u\not=u'\in{}U$.}\\
\end{thm}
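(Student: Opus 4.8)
The plan is to compute the entries of the products $G_u^T G_{u'}$ directly from the definition of $G_u^{w,v}$ and to reduce everything to a single parity identity for $f$. First I would record the shape of a product. The $(v,v')$-entry of $G_u^T G_{u'}$ is $\sum_{w\in W} G_u^{w,v}\,G_{u'}^{w,v'}$; for fixed $v$ the factor $G_u^{w,v}$ is nonzero only at the single index $w=u+v$, so the sum collapses to that one term, which survives only if also $u+v=u'+v'$, i.e. (over $\Z_2$) only if $u+u'=v+v'=:a$. In that case $W=U+V$ guarantees $w=u+v\in W$, and the entry equals $(-1)^{f(u,v)+f(u',v')}$. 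Part (i) is then immediate: taking $u=u'$ forces $v=v'$, the diagonal entries are $\left((-1)^{f(u,v)}\right)^2=1$, the off-diagonal ones vanish, and so $G_u^T G_u=I_n$.

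For part (ii) I would observe that, by the same computation, the $(v,v')$-entries of both $G_u^T G_{u'}$ and $G_{u'}^T G_u$ are supported on exactly the same pairs, those with $u+u'=v+v'=a$ (the conditions $u+v=u'+v'$ and $u'+v=u+v'$ are both equivalent to $u+u'=v+v'$). On such a pair the two entries are $(-1)^{f(u,v)+f(u',v')}$ and $(-1)^{f(u',v)+f(u,v')}$, so the $(v,v')$-entry of the sum is $(-1)^{f(u,v)+f(u',v')}\left(1+(-1)^{\Delta}\right)$ with $\Delta:=f(u,v)+f(u,v')+f(u',v)+f(u',v')$. Everything therefore reduces to showing $\Delta=1$.

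The key step, and the part I expect to be the main obstacle, is the identity $\Delta=\alpha(a)$ whenever $u+u'=v+v'=a$. I would prove it by exploiting the bidegree structure of $f$. Since every monomial of $f$ is linear in the second argument, $f(u,\cdot)$ is $\Z_2$-linear, giving $f(u,v)+f(u,v')=f(u,a)$ and $f(u',v)+f(u',v')=f(u',a)$, so $\Delta=f(u,a)+f(u+a,a)$. It then remains to establish the first-difference relation $f(u+a,a)+f(u,a)=f(a,a)$. Writing $f=c+b$ with $c$ the cubic sum and $b(u,v)=\sum_{i\le j}u_iv_j$, the bilinear part gives $b(u+a,a)=b(u,a)+b(a,a)$ directly; for the piece $c$, which is homogeneous quadratic in the first argument, one has $c(u+a,a)=c(u,a)+c(a,a)+B_a(u,a)$, where $B_a$ is the associated polar bilinear form, and a short expansion shows $B_a(u,a)=0$ over $\Z_2$ (each of the three resulting monomials occurs twice and cancels). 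This vanishing of the polar term is precisely the special feature of the cubic function $f$ that makes the construction work, so it is the crux of the argument. Combining the two parts yields $f(u+a,a)+f(u,a)=b(a,a)+c(a,a)=f(a,a)=\alpha(a)$, hence $\Delta=\alpha(a)$.

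Finally I would invoke admissibility. Whenever a pair $(v,v')$ contributes with $w=u+v\neq 0$, the equality $w=u+v=u'+v'$ (with $u\neq u'$) is a genuine double decomposition, so condition \eqref{Cond} gives $\alpha(u+u')=\alpha(a)=1$. In the degenerate case $w=0$ one has $u=v$ and $u'=v'$, whence $u,u'\in U\cap V$, and then the nonzero element $a=u+u'\in U+V=W$ decomposes as $u+u'$ and as $u'+u$, so \eqref{Cond} again yields $\alpha(a)=1$. Thus $\Delta=1$, the factor $1+(-1)^{\Delta}$ vanishes on every contributing pair, and all other entries are zero by construction; therefore $G_u^T G_{u'}+G_{u'}^T G_u=0$, proving (ii).
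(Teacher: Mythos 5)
Your proposal is correct, and its skeleton coincides with the paper's: compute the $(v,v')$-entry of $G_u^TG_{u'}$, observe that both $G_u^TG_{u'}$ and $G_{u'}^TG_u$ are supported exactly on pairs with $u+u'=v+v'=:a$, and reduce everything to the key identity $f(u,v)+f(u,v')+f(u',v)+f(u',v')=\alpha(u+u')$, which admissibility then forces to equal $1$. You differ from the paper in two places. For the key identity, the paper substitutes $v'=u+u'+v$, expands by linearity in the second argument to reach $f(u,u)+f(u,u')+f(u',u)+f(u',u')$, and finishes with the first polarization formula $f(u,v)+f(v,u)=\alpha(u+v)+\alpha(u)+\alpha(v)$; you instead reduce $\Delta$ to $f(u,a)+f(u+a,a)$ and prove the pseudo-linearity $f(u+a,a)=f(u,a)+f(a,a)$ from scratch, splitting $f$ into its cubic and bilinear parts and checking that the polar term $B_a(u,a)$ vanishes (it does: each of the three resulting monomial families occurs twice mod $2$). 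This is exactly Property (b) in the paper's appendix, so your route is equivalent in substance, with the verification inlined rather than quoted; the paper's polarization route buys a cohomological interpretation, yours is marginally more self-contained. More interestingly, you explicitly treat the degenerate case $w=0$ (i.e., $v=u$, $v'=u'$, which requires $u,u'\in U\cap V$ and genuinely occurs in the paper's own constructions, e.g., the rate-$1$ designs where $V\subset U=\Z_2^r$): the admissibility condition \eqref{Cond} as stated only concerns nonzero $w$, and the paper's proof passes over this case in silence. Your workaround --- apply \eqref{Cond} to the nonzero element $a=u+u'$, which lies in $W=U+V$ and admits the two distinct decompositions $u+u'$ and $u'+u$ --- closes that small gap cleanly using nothing beyond admissibility. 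So: correct, same overall strategy, a slightly different derivation of the central lemma, and a more careful case analysis than the paper's own proof.
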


This theorem is proved in \cite{MGO} and \cite{LMO}.
For the sake of completeness, we include the proof into Appendix.\\

\begin{cor}\label{consROD}
 If $U,V,W$ is an admissible triple, then
 
(i)
the matrix 
$$G=\sum_{u\in U}\;x_uG_u$$
 is a ROD with parameters 
$\left[\#W,\#V,\#U\right]$ in real variables $x_u$, $u\in U$,
where $\#$ is the cardinality of a set;

(ii)
in the case of rate 1, i.e., where $k=p$,
the matrix $G$ has \textit{no zero entries}.\\
\end{cor}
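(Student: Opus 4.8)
The plan is to deduce both statements directly from Theorem~\ref{ThmOne} together with elementary bookkeeping on the index sets; no new structural input is needed.

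For part (i), I would first record the shape of a generic entry of $G$. The entry of $G_u$ at position $(w,v)$ is nonzero only when $u=v+w$, and since $v+w$ is determined by the pair $(w,v)$, at most one summand of $G=\sum_{u\in U}x_uG_u$ contributes to each position. Hence the $(w,v)$-entry of $G$ equals $(-1)^{f(v+w,v)}x_{v+w}$ when $v+w\in U$ and $0$ otherwise; in particular every entry lies in $\{0,\pm x_u\}$, the matrix has size $\#W\times\#V$, and since each $u\in U$ occurs (e.g. in position $(u+v,v)$ for any $v\in V$) there are exactly $\#U$ variables. To verify the defining identity I would expand bilinearly,
$$G^TG=\sum_{u,u'\in U}x_ux_{u'}\,G_u^TG_{u'},$$
and split the double sum into its diagonal $u=u'$ and off-diagonal $u\neq u'$ parts. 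Theorem~\ref{ThmOne}(i) turns the diagonal into $\bigl(\sum_{u\in U}x_u^2\bigr)I_n$, while grouping each unordered off-diagonal pair $\{u,u'\}$ and applying Theorem~\ref{ThmOne}(ii) kills the remainder. This yields $G^TG=\bigl(\sum_{u\in U}x_u^2\bigr)I_n$, which is exactly the ROD condition for parameters $[\#W,\#V,\#U]$.

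For part (ii), the task is to show that in the rate-$1$ case no position $(w,v)$ is empty, i.e. that $v+w\in U$ for every $w\in W$ and $v\in V$. I would fix $v\in V$ and consider the translation map $u\mapsto u+v$. Because $W=U+V$, this map sends $U$ into $W$; because translation by $v$ is a bijection of $\Z_2^r$, it is injective. Rate $1$ means precisely $\#U=\#W$, so an injection between finite sets of equal cardinality is a bijection, giving $\{u+v:u\in U\}=W$. Consequently, for every $w\in W$ there is a (unique) $u\in U$ with $w=u+v$, that is $v+w\in U$, so the corresponding entry of $G$ is $(-1)^{f(v+w,v)}x_{v+w}\neq 0$.

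Neither step is genuinely hard once Theorem~\ref{ThmOne} is in hand: the only points requiring care are the uniqueness of the contributing index $u=v+w$ (so that the entries of $G$ really are single monomials rather than sums) and, in part (ii), the observation that rate $1$ forces the translated copy of $U$ to exhaust $W$. The main conceptual obstacle has already been absorbed into Theorem~\ref{ThmOne}, whose orthogonality relations encode the admissibility condition~\eqref{Cond}; the corollary is then a formal assembly of those relations together with the counting argument above.
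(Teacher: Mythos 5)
Your proposal is correct and takes essentially the same route as the paper, which states the corollary without proof as an immediate consequence of Theorem~\ref{ThmOne}: the bilinear expansion $G^TG=\sum_{u,u'\in U}x_ux_{u'}\,G_u^TG_{u'}$, with the diagonal handled by part (i) and unordered off-diagonal pairs killed by part (ii), is exactly the standard argument the paper invokes implicitly (it reappears in the appendix as the equivalence with the Hurwitz matrix equations~\eqref{Hurwmat}). Your counting argument for (ii) --- the translation $u\mapsto u+v$ injects $U$ into $W$ by admissibility, and $\#U=\#W$ forces it to be a bijection, so every position $(w,v)$ carries $\pm x_{v+w}$ --- correctly supplies the detail the paper treats as obvious.
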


Our next task is to construct admissible triples $U,V,W$.
We will use the fact that the function $\alpha$ vanishes only on the elements whose weight is a multiple of 4.
Note that the easiest way to guarantee condition (\ref{Cond})
 is to choose the set $V$ so that $\a(v+v')=1$
for all $v,v'\in{}V$.

\subsection{RODs of rate 1}\label{RODTJC}
In this section, we provide triples of sets $U,V,W$ that produce real orthogonal designs
$$
G=\sum_{u\in U}\;x_uG_u
$$
of rate 1, with minimum delay, i.e. the parameters of $G$ are 
$\left[2^{\d(n)}, n, 2^{\d(n)}\right]$,
(provided $n\not=1 \mod 8$).\\

\noindent
\textbf{Case} $r=0,1,2 \mod 4$.
One chooses the following subsets
$$
\begin{array}{lcl}
V&=& \left\{ \e^j, \widehat{\e^j}, \,1\leq j \leq r \right\},\\[6pt]
U&=& W\;=\;\Z_2^r,
\end{array}
$$
where $\hat{}$ is the duality (\ref{Dua}).
Then $G$ is a $\left[2^r, 2r, 2^r\right]$-ROD.
\footnote{
This ROD is optimal,
except for the case $r=0\mod 4$, where,
according to the Hurwitz-Radon theorem, there
is a $\left[2^r, 2r+1, 2^r\right]$-ROD. 
We do not dwell here on a more involved construction to produce such a ROD.}\\

\noindent
\textbf{Case} $r=3 \mod 4$.
One chooses the following subsets
$$
\begin{array}{lcl}
V&=&\left\{\e,\;  \widehat{ \e}, \; \e^j,\;  \widehat{\e^j}, \; 1\leq j \leq r\right\},\\[6pt]
U&=& W\;=\;\Z_2^r,
\end{array}
$$
the $G$ is a $\left[2^r, 2r+2, 2^r\right]$-ROD.

\subsection{Non-square RODs of rate $\frac{1}{2}+\frac{1}{2m}$}\label{RODLA}
All the RODs below have maximal rate $\frac{1}{2}+\frac{1}{2m}$,
 when $r=2m$ or $2m-1$. \\

\noindent
\textbf{Case} $r=1,2 \mod 4$.
Consider $r=2m-1$ or $r=2m$ and choose the set $U$ of the elements
of weight $m$ and their dual,
the set $V$ is chosen as in the first case:
$$
\begin{array}{lcl}
U&=& \left\{u, \hat u: |u|=m \right\},\\[6pt]
V&=& \left\{ \e^j, \; \widehat{\e^j},\;  1\leq j \leq r \right\},\\
\end{array}
$$
It follows that the space $W=U+V$ is:
$$
\begin{array}{lcl}
W&=& \left\{u: |u|=
m-1,\, m,\,m+1
 \right\}\;\bigcup\\[6pt]
&& \left\{u: u_1=1\; ,\; |u|=m+2 \right\}\;\bigcup
\\[6pt]&&
\left\{u: u_1=0 \;,\; |u|=m-2 \right\}.
\end{array}
$$
The matrix $G$ is a ROD with parameters
$$
\textstyle
\left[2{r+1\choose m-1}, \,2r,\, 2{r\choose m}\right],\quad
\left[4{r\choose m-1},\,2r,\, 2{r\choose m} \right],
$$
for odd $r$ and even $r$, respectively.\\

\noindent
\textbf{Case} $r=0 \mod 4$.
Consider $r=2m$ (where $m$ is even) and choose the following subsets
$$
\begin{array}{lcl}
U&=& \left\{u: u_1=1\; ,\; |u|=m \right\} \bigcup \\[4pt]
&&\; \left\{u: u_1=0 \;,\; |u|=m-1 \right\},\\[10pt]
V&=& \left\{\e,\,\hat{\e},\, \e^j, \,\widehat{\e^j},\;  2\leq j \leq n \right\},\\[10pt]
W&=&  \left\{u: u_1=1\; ,\; |u|=m-1,m+1 \right\} \bigcup\\[4pt]
&&\; \left\{u: u_1=0 \;,\; |u|=m-2,m \right\},\\
\end{array}
$$
then $G$ is a $\left[2{r\choose m-1},\,2r,\,  2{r-1\choose m-1}\right]$-ROD.\\

\noindent
\textbf{Case} $r=3 \mod 4$.
Consider $r':=r+1$ and apply the previous case with $r'=0 \mod 4$
to obtain a 
$
\left[2{r+1\choose m-1},\,2r+2,\, 2{r\choose m} \right]
$-ROD,
where $r=2m-1$.
Removing two columns, we obtain a 
$\left[2{r+1\choose m-1}, \,2r,\, 2{r\choose m}\right]$-ROD.\\

\noindent
In each of the above cases, condition (\ref{Cond})
is satisfied for all $v,v'\in{}V$.\\

To finish this section, let us mention that the binary numeration
have already been efficiently used in \cite{DaRa}, \cite{DaRa2}
to construct RODs and CODs of maximal rate.
In particular, subsets of $\Z_2^r$ similar to our sets $U,V$ and $W$
were described.
The main difference of our approach is the function $f$
and explicit construction of the matrices.

\section{Reduction from ROD to COD}\label{RtoC}

In this section, we present a procedure to reduce a $\left[ 2p,2n,2k \right]$-ROD
to a $\left[p,n,k \right]$-COD.
Such a procedure is not always possible, it requires nice properties of sets $U,V,W$.
We first describe the general procedure of reduction
and then apply it to RODs of rate 1 constructed in Section \ref{RODTJC}.

\subsection{The general procedure}
The main idea is to use a duality
$$
\widehat{}\,:\, \Z_2^r\to\Z_2^r
$$ 
defined by
 $\hat u=u+e,$
 where  the element $e\in\Z_2^r$ satisfies
$
f(e,e)=1,
$
and to choose sets $U,V$ and $W$ stable under 
the duality:
 $$
 \widehat{U}=U, \quad  \widehat{V}=V, \quad  \widehat{W}=W,
 $$
In practice, we use the hat duality (\ref{Dua}), i.e., $e=\e^1$.

Given a ROD of type $\left[ 2p,2n,2k \right]$ defined by sets $U,V$ and $W$ in $\Z_2^r$,
our goal is to reduce it to a COD with parameters $\left[p,n,k \right]$.
 The method consists in two steps. 
 First, we introduce splitting of the sets $U,V,W$,
 in order to decompose the matrices $G=\sum_{u\in U}x_uG_u$ into admissible $(2\times2)$-blocks. 
Second, we replace the admissible blocks
by complex variables, $z_u=x_u+ix_{\hat u}$ or $z^*_u=x_u-ix_{\hat u}$.\\

STEP 1:
 We fix
 the following splitting of $U$
\begin{equation}
\label{defu0}
\begin{array}{rcl}
 U_0&:=&\left\{u\in U: f(u,e)=0\right\},\\[6pt]
 U_1&:=& \left\{u\in U: f(u,e)=1 \right\}.
\end{array}
\end{equation}
Note that $\widehat{U_0}=U_1$, cf. Property (b) of $f$
in Appendix.

We now need to find subsets $V_0$, $V_1$, $W_0$ and $W_1$
satisfying the following conditions
\begin{equation}\label{splitvw}
\begin{array}{rcccl}
V&=&V_0\bigsqcup V_1,&& V_1=\widehat{V_0};\\
W&=&W_0\bigsqcup W_1, && W_1=\widehat{W_0};\\
W_0&=&U_0+V_0&=&U_1+V_1;\\
W_1&=&U_0+V_1&=&U_1+V_0,
\end{array}
\end{equation}
where $\bigsqcup$ denotes the disjoint union.

These splitting induce a natural decomposition of the matrices~$G_u$
into $(2\times 2)$-blocks whose columns are labelled by $(v,\hat v)\in V_0\times V_1$ and 
the rows by $(w,\hat w)\in W_0\times W_1$ :
$$
G_u=
\begin{array}{cc}
v\quad\hat v&\\
\left(
\begin{array}{cccc}
&\vdots \quad\vdots&\\[5pt]
&\vdots\quad\vdots&\\[8pt]
\cdots&&\cdots\\[-12pt]
&\boxed{\widetilde{G}_u^{w,v}}&\\[-10pt]
\cdots&&\cdots\\
&\vdots\quad\vdots&\\
\end{array}
\right)
&\left.
\begin{array}{l}
\\
\\
\\
\\
\!\!\!\!w\\
\!\!\!\!\hat w\\
\\
\end{array}\right.
\end{array}
$$
where $u=v+w$ and so $\hat u= v+ \hat w= \hat v+ w$.

For $u\in U_0$ (and therefore $\hat u\in U_1$), the non-zero blocks are of the form 
$$
\widetilde{G}^{w,v}_u=\left(
\begin{array}{cc}
(-1)^{f(u,v)}&0\\[6pt]
0&(-1)^{f(u,\hat v)}
\end{array}
\right)
$$
and 
$$
\widetilde{G}^{w,v}_{\hat{u}}=\left(
\begin{array}{cc}
0& (-1)^{f(\hat u,v)}\\[6pt]
(-1)^{f(\hat u,\hat v)}&0
\end{array}
\right);
$$
non-zero blocks are located at the same place in $G_u$ and $G_{\hat u}$.

Moreover, since $f$ is linear in the 2nd variable,
$$
\begin{array}{lclcl}
f(u,\hat v)&=&f(u,v)+f(u,e)&=&f(u,v),\\[4pt]
f(\hat u,\hat v)&=&f(\hat u,v)+f(\hat u,e)&=&f(\hat u,v)+1,
\end{array}
$$
so that the entries in the blocks of $G_u$ are of the same sign and those of $G_{\hat u }$ are 
of the opposite sign.\\

STEP 2:
The matrix $G=\sum_{u\in U}x_uG_u$ decomposes into $(2\times 2)$-blocks, and the non-zero blocks 
are of two types
\begin{equation*}
(T1) \quad
\pm
\left(
\begin{array}{rr}
x_u& x_{\hat u}\\[4pt]
-x_{\hat u}&x_u
\end{array}
\right),
 \text{ or }
 (T2)\quad
\pm
\left(
\begin{array}{rr}
x_u& -x_{\hat u}\\[4pt]
x_{\hat u}&x_u\\
\end{array}
\right).
\end{equation*}
We construct a complex matrix $H$ from $G$ by substituting to the block (T1) 
the complex variable $\pm z_u$ and to the block $(T2)$ the complex conjugate variable $\pm z_{u}^*$.
More precisely, the entry of $H$ in position $(w,v)\in W_0\times V_0$ is
$$
H^{w,v}=
\left\{
\begin{array}{l}
(-1)^{f(v+w,v)}z_{v+w},\, \text{if } f(\widehat{v+w},v)=f(v+w,v),\\[4pt]
(-1)^{f(v+w,v)}z^*_{v+w},\, \text{otherwise,}
\end{array}
\right.
$$
if $v+w\in U_0$, and $H^{w,v}=0$, otherwise.\\

\begin{thm}\label{consCOD}
The constructed matrix 
$H$
defines a COD with parameters 
$\left[p,\,n,\,k\right]$.
\end{thm}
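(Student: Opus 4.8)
The plan is to recognize the complex matrix $H$ as the image of the real design $G$ under the classical realification of complex numbers, and then to transport the orthogonality relation of $G$ across this correspondence. Let $\rho\colon\C\to\M_2(\R)$ be the injective ring homomorphism
$$
\rho(x+iy)=\begin{pmatrix} x & y\\ -y & x\end{pmatrix},
$$
which satisfies $\rho(z^*)=\rho(z)^T$, $\rho(r)=r\,I_2$ for $r\in\R$, and $\rho(z)^T\rho(z)=|z|^2I_2$. Extending $\rho$ entrywise to matrices, so that each complex entry becomes a $2\times2$ real block, gives a map (still denoted $\rho$) from complex $p\times n$ matrices to real $2p\times2n$ matrices that is multiplicative and intertwines the conjugate transpose with the transpose: $\rho(AB)=\rho(A)\rho(B)$ and $\rho(A^*)=\rho(A)^T$. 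Both identities follow immediately from $\rho$ being a ring homomorphism with $\rho(z^*)=\rho(z)^T$.

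First I would check that, in the block basis of STEP 1, the real matrix $G=\sum_{u\in U}x_uG_u$ is exactly $\rho(H)$. The splitting conditions (\ref{splitvw}) guarantee that $G$ genuinely decomposes into $2\times2$ blocks indexed by $(w,\hat w)\in W_0\times W_1$ and $(v,\hat v)\in V_0\times V_1$: inside the block with rows $w,\hat w$ and columns $v,\hat v$, the only $u\in U$ that contribute are $u=v+w$ and $u=\widehat{v+w}$, and both indeed lie in $U$ because $\widehat{U}=U$. The sign computation of STEP 1 shows that each such nonzero block is of type (T1) or (T2), that is, equal to $\rho(\pm z_u)$ or $\rho(\pm z_u^*)$; comparing with the definition of the entry $H^{w,v}$ in STEP 2 shows that the $(w,v)$ block of $G$ equals $\rho(H^{w,v})$, and that it vanishes exactly when $H^{w,v}=0$. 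Hence $G=\rho(H)$ as a block matrix.

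With this identification the conclusion is formal. By functoriality, $G^T=\rho(H)^T=\rho(H^*)$, and therefore $G^TG=\rho(H^*)\rho(H)=\rho(H^*H)$. On the other hand, since $U,V,W$ is an admissible triple, Corollary~\ref{consROD} tells us that $G$ is a $[2p,2n,2k]$-ROD, so $G^TG=\big(\sum_{u\in U}x_u^2\big)I_{2n}$. Because $U=U_0\bigsqcup U_1$ with $U_1=\widehat{U_0}$, the real scalar rewrites as $\sum_{u\in U}x_u^2=\sum_{u\in U_0}(x_u^2+x_{\hat u}^2)=\sum_{u\in U_0}|z_u|^2$, and $\big(\sum_{u\in U_0}|z_u|^2\big)I_{2n}=\rho\big((\sum_{u\in U_0}|z_u|^2)I_n\big)$ since $\rho$ sends real scalars to scalar blocks. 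Thus $\rho(H^*H)=\rho\big((\sum_{u\in U_0}|z_u|^2)I_n\big)$, and injectivity of $\rho$ forces $H^*H=\big(\sum_{u\in U_0}|z_u|^2\big)I_n$. Counting $\#W_0=p$, $\#V_0=n$, $\#U_0=k$ then yields the parameters $[p,n,k]$.

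The \emph{main obstacle} is the block-matching step, namely verifying rigorously that $G=\rho(H)$. Two things must be pinned down there: that the splitting (\ref{splitvw}) really confines every nonzero entry of $G$ to the $2\times2$ blocks, with no leakage across blocks, which rests on the duality-stability $\widehat{U}=U$, $\widehat{V}=V$, $\widehat{W}=W$; and that the choice of type (T1) versus (T2) matches the condition $f(\widehat{v+w},v)=f(v+w,v)$ used to decide between $z_{v+w}$ and $z^*_{v+w}$ in the definition of $H$, which is precisely the sign bookkeeping for $f$ carried out in STEP 1. Once these are secured, the orthogonality of $H$ is a purely formal consequence of that of $G$ through the ring homomorphism $\rho$.
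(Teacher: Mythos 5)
Your proof is correct, and it globalizes what the paper does locally. The paper's own proof works at the level of entries of $H^*H$: the diagonal terms are computed by observing that each symbol $z_u$ (meaning $\pm z_u$ or $\pm z_u^*$) appears exactly once in every column of $H$, and the off-diagonal terms are shown to vanish by a pairwise-cancellation argument on two columns $v,v'$, split into Case~I (the four relevant entries $z_1,z_2,z_3,z_4$ are nonzero, the ROD property of $G$ gives the block relation $A_1^TA_2+A_3^TA_4=0$, and this is then said to ``translate'' into $z_1^*z_2+z_3^*z_4=0$) and Case~II (a degenerate zero pattern). Your realification homomorphism $\rho\colon\C\to\M_2(\R)$ is precisely the translation mechanism the paper invokes implicitly in that step --- the blocks (T1) and (T2) are exactly $\rho(\pm z_u)$ and $\rho(\pm z_u^*)$ --- but by establishing $G=\rho(H)$ once and for all and then using $\rho(H)^T\rho(H)=\rho(H^*H)$ together with injectivity of $\rho$, you obtain the full scalar identity $H^*H=\bigl(\sum_{u\in U_0}|z_u|^2\bigr)I_n$ in one stroke, subsuming both the diagonal count and the Case~I/Case~II analysis. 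What the paper's local argument buys is an explicit, self-contained verification showing which configurations of zeros can occur and where the ROD hypothesis enters; what yours buys is economy and rigor, since it makes precise the paper's unproved ``translates to'' step and exhibits the reduction as literally the inverse of the standard embedding $\C\hookrightarrow\M_2(\R)$. The point you flag as the main obstacle --- that the splittings \eqref{splitvw} confine every nonzero entry of $G$ to the designated $2\times2$ blocks, and that the (T1)/(T2) dichotomy matches the condition $f(\widehat{v+w},v)=f(v+w,v)$ --- is exactly what STEP~1 of the construction establishes (via $f(u,e)=0$ for $u\in U_0$ and linearity of $f$ in the second variable), so your appeal to it is legitimate and your argument is complete; the only cosmetic addition worth making is the remark that the hat involution is fixed-point-free (because $e\neq 0$), which is what guarantees $\#W_0=p$, $\#V_0=n$, $\#U_0=k$ in the final parameter count.
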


\subsection{CODs of parameters $\LA$}

As application of the above procedure, let us reduce
the RODs constructed in Section \ref{RODLA}, in order to obtain
the optimal CODs  of type $\LA$.
We need to describe here the subsets 
$U_0,U_1, V_0,V_1,W_0,W_1$ 
satisfying \eqref{defu0} and \eqref{splitvw}.

From the expression of $f$ we see that 
$f(u,\e^1)$ depends only on the class $|u|\mod4$.
More precisely
$$
\begin{array}{c|c|c|c|c}
|u| \mod4&\quad0\quad&\quad1\quad&\quad2\quad&\quad3\quad\\[6pt] \hline \hline
f(u,\e^1) \text{ if } u_1=0&0&0&1&1\\[4pt] \hline
f(u,\e^1) \text{ if } u_1=1&0&1&1&0\\[4pt]
\end{array}
$$
for an arbitrary $u\in\Z_2^r$.\\

\noindent
\textbf{Case} $r=1,2 \mod 4$.
Let now $u\in{}U$, so that $|u|=m$ and
$r=2m-1$ or $2m$.
In this case, $m$ is necessarily odd.
\begin{enumerate}
\item[\textbullet] if $m=1\mod4$, then for $u\in U$ we have
$$
f(u,\e^1)=0 \Longleftrightarrow u_1=0,
$$
in other words,
$$
U_0=\left\{u\in{}U\,|\,u_1=0\right\}, \quad U_1=\left\{u\in{}U\,|\,u_1=1\right\}.
$$
We easily check that
$$
\begin{array}{ll}
V_0=\left\{v\in{}V\,|\,v_1=0\right\}, & V_1=\left\{v\in{}V\,|\,v_1=1\right\},\\[4pt]
 W_0=\left\{w\in{}W\,|\,w_1=0\right\},& W_1=\left\{w\in{}W\,|\,w_1=1\right\}.
 \end{array}
$$
satisfies property \eqref{splitvw}.

\item[\textbullet] if $m=3\mod4$, then for $u\in U$ we have
$$
f(u,\e^1)=0 \Longleftrightarrow u_1=1,
$$
in other words,
$$
U_0=\left\{u\in{}U\,|\,u_1=1\right\}, \quad U_1=\left\{u\in{}U\,|\,u_1=0\right\}.
$$
We easily check that
$$
\begin{array}{ll}
V_0=\left\{v\in{}V\,|\,v_1=0\right\}, & V_1=\left\{v\in{}V\,|\,v_1=1\right\},\\[4pt]
 W_0=\left\{w\in{}W\,|\,w_1=1\right\}& W_1=\left\{w\in{}W\,|\,w_1=0\right\}.
 \end{array}
$$
again satisfies property \eqref{splitvw}.\\
\end{enumerate}

\noindent
\textbf{Case} $r=0 \mod 4$.
In this case, $m$ is defined by $r=2m$,
so that $m$ is even.
\begin{enumerate}
\item[\textbullet] 
If $m=0\mod4$, then for $u\in U$ we have
$$
f(u,\e^1)=0 \Longleftrightarrow u_1=1,
$$
in other words,
$$
U_0=\left\{u\in{}U\,|\,u_1=1\right\}, \quad U_1=\left\{u\in{}U\,|\,u_1=0\right\}.
$$
We easily check that
$$
\begin{array}{ll}
V_0=\left\{v\in{}V\,|\,v_1=0\right\}, & V_1=\left\{v\in{}V\,|\,v_1=1\right\},\\[4pt]
 W_0=\left\{w\in{}W\,|\,w_1=1\right\},& W_1=\left\{w\in{}W\,|\,w_1=0\right\}.
 \end{array}
$$
have the desired property.
\item[\textbullet] 
If $m=2\mod4$, then for $u\in U$ we have
$$
f(u,\e^1)=0 \Longleftrightarrow u_1=0,
$$
in other words,
$$
U_0=\left\{u\in{}U\,|\,u_1=0\right\}, \quad U_1=\left\{u\in{}U\,|\,u_1=1\right\}.
$$
We easily check that
$$
\begin{array}{ll}
V_0=\left\{v\in{}V\,|\,v_1=0\right\}, & V_1=\left\{v\in{}V\,|\,v_1=1\right\},\\[4pt]
 W_0=\left\{w\in{}W\,|\,w_1=0\right\}& W_1=\left\{w\in{}W\,|\,w_1=1\right\}.
 \end{array}
$$
have the desired property.
\end{enumerate}

\section{Induction}\label{IndSec}

The second procedure that we call
\textit{induction} allows one to transform a $\left[p,n,k\right]$-ROD to 
a $\left[p,n,\frac{k}{2} \right]$-COD. 
This induction consists in complexification composed with reduction;
it can be applied whenever the set $U$, $V$ and $W$ are stable under 
an involution $\tilde u=u+e$ for an element $e$ 
satisfying $f(e,e)=1$.

\subsection{The general construction}
STEP 1:
We consider the splitting of $U$ as in (\ref{defu0}) and 
define the following subsets of 
$\Z_2^{r+1}=\Z_2^r\times 0\,\sqcup\,\Z_2^r\times 1$
\begin{eqnarray*}
 U'&=&U_0\times 0\sqcup U_1\times 1,\\[4pt]
V'&=&V\times 0 \sqcup V\times 1,\\[4pt]
W'&=&W\times 0 \sqcup W\times 1.
\end{eqnarray*}
For each $u\in U$, we embed the previous matrix $G_u$ into a twice bigger matrix,
 $G_{(u,\tau)}$, where $(u,\tau)\in U'$, with columns indexed by $W'$
and rows indexed by $V'$.
The matrix $G_{(u,\tau)}$ is composed by $(2\times2)$-blocks:
$$
G_{(u,\tau)}=
\begin{array}{cc}
{}_{(v,0)(\hat v,1)}&\\
\left(
\begin{array}{cccc}
&\vdots \quad\vdots&\\[4pt]
&\vdots\quad\vdots&\\[6pt]
\cdots&&\cdots\\[-12pt]
&\boxed{\widetilde{G}_{u}^{w,v}}&\\[-10pt]
\cdots&&\cdots\\
&\vdots\quad\vdots&\\
\end{array}
\right)
&\left.
\begin{array}{l}
\\
\\
\\[4pt]
\!\!\!\!_{(w,0)}\\
\!\!\!\!_{(\hat w,1)}\\
\\
\end{array}\right.
\end{array},
$$
where the non zero blocks correspond to $(u,\tau)=(v+w,\tau)$ 
and coincide with those of $G_u$.
These data provide a $[2p,2n,k]$-ROD.\\

STEP 2: For each $u\in U_0$, we define
$$
G'_u=x_uG_{(u,0)}+x_{\hat u}G_{(\hat u,1)}.
$$
These matrices decompose into blocks that are all of type (T1) or (T2).
We then apply the reduction procedure to obtain a $\left[p,n,\frac{k}{2}\right]$-ROD.

\subsection{CODs with parameters $\TJC$ and $\DR$}
Consider the RODs of rate 1 constructed in Section \ref{RODTJC}.\\

Following \cite{TJC}, one can apply the following obvious doubling process:
$$
G^{(2)}:=\left(
\begin{array}{l}
G\\
G'
\end{array}
\right)
$$
where the variables $x_u$ in $G$ are now considered as complex variables,
and $G'$ is copy of $G$ associated to the conjugate variables, i.e. is defined by 
$$
G'=\sum_{u\in U}\;x^*_uG_u.
$$
The matrix $G^{(2)}$ is a COD of rate $\frac{1}{2}$, with parameters 
$$
\left[2^{\d(2r)+1},\,2r,\, 2^{\d(2r)}\right],
$$
that are precisely the parameters of $\mathbf{TJC}_{n}$,
for even $n$.
Removing a column, we are led to CODs with parameters $\mathbf{TJC}_{n}$,
for odd $n$, except for $n=1\mod 8$.\\

Applying the induction procedure to the RODs of rate~1, leads to
CODs with parameters 
$$
\left[2^{\d(2r)},\,2r,\, 2^{\d(2r)-1}\right],
$$
that are precisely the parameters of $\mathbf{DR}_{n}$,
for even $n$.
Again, removing a column, we obtain CODs of type $\mathbf{DR}_{n}$,
for odd~$n$, except for $n=1\mod 8$.\\

In both cases, the obtained CODs have no zero entries.\\

The missing case $n=1\mod 8$ escapes from the technique 
used in this paper.
This is due to the fact that we do not obtain a $\left[2^r, 2r+1, 2^r\right]$-ROD 
with $n=0\mod4$.

\section{Appendices}

\subsection{Properties of the functions $f$ and $\a$.}

The function $f$ has quite remarkable properties that we
briefly discuss here.

It is impossible to reconstruct the function $f$
from the function in one variable $\a$
(which is nothing but the restriction of $f$ to the diagonal in $\Z_2^r\times\Z_2^r$).
However, $\a$ contains the essential characteristics of $f$,
such as its symmetrization.

\begin{enumerate}

\item\label{pol} 
First polarization formula:
$$
f(u,v)+f(v,u)= \a(u+v)+\a(u)+\a(v).
$$
\item\label{polBis} 
Second polarization formula:
$$
\begin{array}{l}
f(u,v)+f(u,v+w)+f(u+v,w)+f(v,w)=\\[4pt]
\hskip2cm \a(u+v+w)\\[2pt]
\hskip2cm +\a(u+v)+\a(u+w)+\a(v+w)\\[2pt]
\hskip2cm +\a(u)+\a(v)+\a(w).
\end{array}
$$
\end{enumerate}
\noindent
These properties can be checked directly.
Note that
the expression in the right-hand-side of \ref{pol}) is called the \textit{coboundary} of $\a$,
it has a deep cohomological meaning.
The expression in the left-hand-side of \ref{polBis}) 
 is the coboundary of $f$, its measures the non-associativity of a certain algebra
 defined by $f$, see \cite{MGO}.
 Finally,  the expression in the right-hand-side of \ref{polBis})  is called the \textit{polarization}
of the cubic form $\a$.
Let us mention that, unlike the theory of quadratic forms,
the theory of cubic forms is not well developed in characteristic 2,
 not much is known.

Let us also give here more elementary properties of $f$
already used in the above constructions:

\begin{itemize}
\item[(a)] Linearity of $f$ in 2nd variable:
$$f(u,v+v')=f(u,v)+f(u,v').$$
\noindent

\item[(b)] Pseudo-linearity in 1st variable:
$$
f(u+v,v)=f(u,v)+f(v,v).
$$
\end{itemize}

We invite the reader to consult \cite{MGO} for more
information about $f$ and $\a$.

\subsection{Proof of Theorem \ref{ThmOne}.}
We apply the formula for matrices multiplication.
The coefficient in position $(v,v')$ in the product $G^T_uG_{u'}$ is
$$
(G^T_uG_{u'})^{v,v'}=
\left\{
\begin{array}{ll}
(-1)^{f(u,v)+f(u',v')}& \text{if } v+v'=u+u',\\[4pt]
0 & \text{otherwise}.
\end{array}
\right.
$$
This implies that $G^T_uG_{u'}=I_n$, and $G^T_uG_{u'}+G^T_{u'}G_{u}=0$ if and only if
$$
(-1)^{f(u,v)+f(u',v')}+(-1)^{f(u',v)+f(u,v')}=0
$$
whenever $v+v'=u+u'$.
The above condition is equivalent to
$$
f(u,v)+f(u',v')+f(u',v)+f(u,v')=1.
$$

\begin{lem}
If $u+u'=v+v'$ then
$$
f(u,v)+f(u',v')+f(u',v)+f(u,v')=\a(u+u').
$$
\end{lem}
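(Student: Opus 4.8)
The plan is to avoid expanding the cubic definition of $f$ altogether and instead derive the identity purely from the two elementary properties (a) and (b) in the Appendix, together with the constraint $u+u'=v+v'$. Accordingly, I would set $s:=u+u'=v+v'$ at the outset; the content of the hypothesis is precisely that a single element $s$ governs both pairs, and the whole computation is organized around making both arguments depend only on $s$.

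First I would collect the four terms on the left according to their \emph{first} argument and invoke linearity of $f$ in its second variable (property (a)): the two terms with first argument $u$ combine as $f(u,v)+f(u,v')=f(u,v+v')=f(u,s)$, and likewise $f(u',v)+f(u',v')=f(u',v+v')=f(u',s)$. The left-hand side therefore collapses to $f(u,s)+f(u',s)$, reducing the claim to a statement involving one common second argument.

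Next I would feed in the constraint in the form $u'=u+s$, rewriting the expression as $f(u,s)+f(u+s,s)$. This is exactly the pattern to which pseudo-linearity in the first variable (property (b)) applies: $f(u+s,s)=f(u,s)+f(s,s)$. Substituting and working over $\Z_2$, the two copies of $f(u,s)$ cancel, leaving $f(s,s)=\a(s)=\a(u+u')$, which is the assertion.

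I do not anticipate a genuine obstacle; the only delicate point is the order of operations. One must apply second-variable linearity \emph{first}, since $f$ is neither symmetric nor linear in its first slot, so there is no direct way to merge the terms by their second arguments. Once both first arguments are expressed relative to the common difference $s$, property (b) contributes exactly the correction term $f(s,s)$, and the cubic structure of $f$ is never needed. As a consistency check, specializing to $u'=v$ and $v'=u$ gives $s=u+v$ and reduces the identity to $f(u,v)+f(v,u)+\a(u)+\a(v)=\a(u+v)$, which is precisely the first polarization formula \ref{pol}, confirming that the two properties are being used coherently.
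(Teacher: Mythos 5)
Your proof is correct, and it is a genuinely tighter route than the one the paper takes. The paper also begins with linearity in the second variable, but it proceeds by substituting $v'=u+u'+v$ into the two terms containing $v'$, expanding, and cancelling the doubled terms $f(u,v)$ and $f(u',v)$; this leaves $f(u',u)+f(u',u')+f(u,u)+f(u,u')$, which is then evaluated as $\a(u+u')$ by the first polarization formula together with $f(u,u)=\a(u)$. You instead organize everything around $s=u+u'=v+v'$: two applications of property (a) collapse the left-hand side to $f(u,s)+f(u',s)$, and a single application of pseudo-linearity (b), $f(u+s,s)=f(u,s)+f(s,s)$, finishes the computation without ever invoking the polarization formula. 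What each buys: your argument is shorter and uses only the two elementary properties (a) and (b), so it is self-contained at the most basic level; the paper's argument, by routing through the first polarization formula, makes the structural point explicit that the alternating sum in the lemma is the coboundary expression whose value the polarization identity computes. Your closing consistency check is well taken and in fact shows more: specializing $u'=v$, $v'=u$ recovers the first polarization formula from the lemma, so on the locus $u+u'=v+v'$ the lemma and the polarization identity are interderivable, which explains why both proofs succeed with such little effort.
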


\begin{proof}
Rewrite the left-hand-side using $v'=u+u'+v$ and the
linearity in the 2nd variable,
after cancellation of double terms
one obtains
$$
f(u',u)+f(u',u')+f(u,u)+f(u,u').
$$
This reduces to $\a(u+u')$ using the first polarization formula.
\end{proof}

Theorem \ref{ThmOne} follows.

\subsection{Proof of Theorem \ref{consCOD}.}\label{PrSect}

First notice that in each column of the matrix $H$ the symbol $z_u$
appears exactly once (``symbol $z_u$'' means one of the following four elements:
$\pm z_u, \pm z^*_u$).
This implies that the diagonal entries in $H^*H$ are all equal to
$$
\sum_{u\in U_0}|z_u|^2.
$$
It remains to show that the non-diagonal entries in $H^*H$ are all zero.
We show that, in the hermitian product of two distinct columns of $H$, the terms
pairwise cancel.

Consider the four entries of $H$, in position $(w,v)$, $(w',v)$, $(w',v)$ and $(w',v')$.
$$
H=
\begin{array}{cc}
\begin{array}{cccccc}
\quad&\quad v&\quad&\; v'&\quad&\end{array}
&\\
\left(
\begin{array}{cccccc}
&\vdots&&\vdots&\\
\cdots&z_1&\cdots&z_2&\cdots\\
&\vdots&&\vdots&\\
\cdots&z_3&\cdots&z_4&\cdots\\
&\vdots&&\vdots&\\
\end{array}
\right)
&\left.
\begin{array}{l}
\\
\!\!\!\!w
\\
\\[8pt]
\!\!\!\!w'\\
\\
\end{array}\right.
\end{array}
$$
\textbf{Case I}: there exist $u,u'$ in $U_0$ such that
\begin{equation*}\label{4ent}
u=v+w=v'+w', \quad  u'=v+w'=v'+w.
\end{equation*}
In this case, the four entries are non zero and one has
$$
z_1,z_4\in 
\left\{\pm z_u, \; \pm z_u^* \right\}, \quad z_2,z_3\in 
\left\{\pm z_{u'}, \;\pm z_{u'}^*\right\}.
$$
The corresponding blocks in the matrix $G$
$$
G=
\left(
\begin{array}{cccccc}
&\vdots&&\vdots&\\
\cdots&A_1&\cdots&A_2&\cdots\\
&\vdots&&\vdots&\\
\cdots&A_3&\cdots&A_4&\cdots\\
&\vdots&&\vdots&\\
\end{array}
\right)
$$
come from $x_uG_u+x_{\hat u}G_{\hat u}+x_{u'}G_{u'}+x_{\hat {u'}}G_{\hat {u'}}$
and therefore satisfy
$$
A^T_1A_2+A^T_3A_4=0.
$$
This translates to
$$
z_1^*z_2+z_3^*z_4=0.
$$
\textbf{Case II}: there do not exist $u,u'$ in $U_0$ such that
\begin{equation*}
u=v+w=v'+w', \quad  u'=v+w'=v'+w.
\end{equation*}
In this case at least one of the following situations holds
$$
z_1=z_4=0 \quad \text{ or } \quad z_2=z_3=0,
$$
and, again, $z_1^*z_2+z_3^*z_4=0$.\\

We have proved that the columns of $H$ are pairwise orthogonal (with respect to the
hermitian product).
And we conclude finally that
$$
H^*H=\left(\Sigma_{u\in U_0}|z_u|^2\right)I_{n}
$$
where $n=\#V_0$.

\subsection{Orthogonal designs and Hurwitz problem of sums of squares}

It is well-known that the existence of $[p,n,k]$-ROD is related to the 
Hurwitz problem on composition of quadratic forms, \cite{Hur},\cite{Rad}
(see also \cite{Sha} for a survey).
\begin{defn}
A Hurwitz sum of squares identity (SSI) of size $[p,n,k]$ is an identity
\begin{equation}
\label{rsN}
\left(a_1^2+\cdots{}+a_k^2 \right) \left(b_1^2+\cdots{}+b_n^2 \right)
=c_1^2+\cdots{}+c_p^2,
\end{equation}
where $c_i$ are bilinear expressions in  $a_i$ and $b_i$ with
integral coefficients (the elements $a_i,b_i,c_i$'s are considered here as real variables).
Such an identity will be referred as a $[p,n,k]$-identity.
\end{defn}

It is known that if such an identity holds then the integral coefficients in the expressions of 
$c_i$'s can be chosen among $\{0, 1, -1\}$.
Hurwitz proved the following fundamental theorem. 
{\it There exists a $[p,n,k]$-ROD if and only if there exists a $[p,n,k]$-SSI.}

Let us recall here how the equivalence can be established.
Since $c$'s are linear in $a$'s and $b$'s, one has
\begin{equation}\label{sumsq}
c=\left(\sum_{1\leq i \leq k}a_iA_i\right)b,
\end{equation}
where $b$ is a column-vector with components $b_i$ and
$c$ is a column-vector with components $c_i$, and where
$A_i$ are $p\times n$ matrices (with entries $0,1,-1$)
One then easily checks that the identity (\ref{sumsq}) holds if and only if
\begin{equation}\label{Hurwmat}
A_i^T\;A_i=I_n, \quad A_i^TA_j+A_j^TA_i=0,\quad \forall\; i\not=j.
\end{equation}
Then, the matrix $A=\left(\sum_{1\leq i \leq k}a_iA_i\right)$ is a $[p,n,k]$-ROD in
the variables $a_i$'s. 

A classical result of Hurwitz \cite{Hur1} states that $[n,n,n]$-SSI 
exist if and only if $n=1,2,4,8$.
This statement relies on classification of normed division algebras,
see \cite{Set} for a survey on relations between division algebras and wireless communications.
The case of $[n,n,k]$-SSI was solved independently
by Hurwitz \cite{Hur} and Radon \cite{Rad},
this is the origin of the famous Hurwitz-Radon function.

Let us mention that our previous results \cite{MGO} and \cite{LMO} 
were formulated in terms of partial solutions to the Hurwitz problem.



\end{document}